\newcommand{\shorten}[1]{\textcolor[rgb]{0.65,0.65,0.65}{#1}\xspace}
\definecolor{firstcolor}{RGB}{255,255,204}
\definecolor{secondcolor}{RGB}{161,218,180}
\definecolor{thirdcolor}{RGB}{65,182,196}
\definecolor{fourthcolor}{RGB}{34,94,168}
\renewcommand{\shorten}[1]{}
\newcommand{\argmax}{\operatorname{argmax}}
\newcommand{\argmin}{\operatorname{argmin}}
\newcommand{\ie}{i.\,e.\xspace}
\newcommand{\etal}{et al.\xspace}
\newcommand{\bigO}{\mathcal{O}}
\author{Moritz von Looz\inst{1} \and Mario Wolter\inst{2} \and Christoph R. Jacob\inst{2} \and Henning Meyerhenke\inst{1}}
\title{Better partitions of protein graphs for subsystem quantum chemistry}
\institute{\email{\{moritz.looz-corswarem, meyerhenke\}@kit.edu} \\ Institute of Theoretical Informatics\\ Karlsruhe Institute of Technology (KIT), Germany \and \email{\{m.wolter, c.jacob\}@tu-braunschweig.de} \\ Institute of Physical and Theoretical Chemistry\\ TU Braunschweig, Germany}
\begin{document}
 \maketitle
 \begin{abstract}
  Determining the interaction strength between proteins and small molecules is key to analyzing their biological function.
  Quantum-mechanical calculations such as \emph{Density Functional Theory} (DFT) give accurate and theoretically well-founded results. 
With common implementations the running time of DFT calculations increases quadratically with molecule size.
  Thus, numerous subsystem-based approaches have been developed to accelerate quantum-chemical calculations.
  These approaches partition the protein into different fragments, which are treated separately.
Interactions between different fragments are approximated and introduce inaccuracies in the calculated interaction energies.

  To minimize these inaccuracies, we represent the amino acids and their interactions as a weighted graph in order to apply graph partitioning.
  None of the existing graph partitioning work can be directly used, though, due to the unique constraints in partitioning such protein graphs.
We therefore present and evaluate several algorithms, partially building upon established concepts, but adapted to handle the new constraints.
  For the special case of partitioning a protein along the main chain, we also present an efficient dynamic programming algorithm 
that yields provably optimal results. In the general scenario our algorithms usually improve the previous approach significantly and take at most a few seconds.
 \end{abstract}

 \section{Introduction}
 \label{sec:introduction}
 \paragraph{Context.}
  The biological role of proteins is largely determined by their interactions with other proteins and small molecules. 
Quantum-chemical methods, such as \emph{Density Functional Theory} (DFT), provide an accurate description of these interactions based on quantum mechanics. 
A major drawback of DFT is its time complexity, which has been shown to be cubic with respect to the protein size in the worst case~\cite{cramer-book,jensen-book}.
For special cases this complexity can be reduced to being linear~\cite{ochsenfeld_linear-scaling_2007,adf-lin-scaling}. 
DFT implementations used for calculations on proteins are in between these bounds and typically show quadratic behavior with significant constant factors, rendering proteins bigger than a few hundred amino acids prohibitively expensive to compute~\cite{cramer-book,jensen-book}. 
As an example, Figure~\ref{plot:time-simulation} in Appendix~\ref{appx:run-time-chem} shows an excerpt from experimental running times 
of quantum-chemical calculations on protein fragments which support this quadratic dependence. 

To mitigate the computational cost, quantum-chemical subsystem methods have been developed~\cite{gordon_fragmentation_2012,jacob_subsystem_2014}.
In such approaches, large molecules are separated into fragments (= subsystems) which are then treated individually.
A common way to deal with individual fragments is to assume that they do not interact with each other.
The error this introduces for protein--protein or protein--molecule interaction energies (or for other local molecular properties of interest) depends on the size and location of fragments: A partition that cuts right through the strongest interaction in a molecule will give worse results than one that carefully avoids this.
It should also be considered that a protein consists of a \emph{main chain} (also called \emph{backbone}) of amino acids. This main chain folds into 3D-secondary-structures,
stabilized by non-bonding interactions (those not on the backbone) between the individual amino acids. These different connection types (backbone vs non-backbone) have different influence on the interaction energies.

\paragraph{Motivation.}
Subsystem methods are very powerful in quantum chemistry~\cite{gordon_fragmentation_2012,jacob_subsystem_2014} but so far require manual cuts with chemical insight to achieve good partitions~\cite{karin-3fde}. 
Currently, when automating the process, domain scientists typically cut every $X$ amino acids along the main chain (which we will call the \emph{naive approach} in the following). This gives in general suboptimal and unpredictable results (see Figure~\ref{plot:frag_err_naive}
in Appendix~\ref{sec:illustrate}). 

By considering amino acids as nodes connected by edges weighted with the expected error in the interaction energies, one can construct (dense) graphs representing the proteins. Graph partitions with a light cut, \ie partitions of the vertex set whose inter-fragment edges have low total weight, should then 
correspond to a low error for interaction energies.
A general solution to this problem has high significance, since it is applicable to any subsystem-based method and since it will enable such calculations on larger systems with controlled accuracy.
Yet, while several established graph partitioning algorithms exist, none of them is directly applicable to our problem scenarios due to additional domain-specific optimization constraints
(which are outlined in Section~\ref{sec:problem}).

\newcommand{\maxSize}{\ensuremath{\mathrm{maxSize}}}

 \paragraph{Contributions.}
For the first of two problem scenarios, the special case of continuous fragments along the main chain, we provide in Section~\ref{sec:algo-main-chain}
a dynamic programming (DP) algorithm. We prove that it yields an optimal solution with a worst-case time complexity of $\bigO(n^2 \cdot \maxSize)$.

For the general protein partitioning problem, we provide three algorithms using established partitioning concepts, now equipped with techniques
for adhering to the new constraints (see Section~\ref{sec:algo-general}):
(i) a greedy agglomerative method,
(ii) a multilevel algorithm with Fiduccia-Mattheyses~\cite{FM82} refinement, and
(iii) a simple postprocessing step that ``repairs'' traditional graph partitions.

Our experiments (Section~\ref{sec:experiments}) use several protein graphs representative for DFT calculations. Their number of nodes
is rather small (up to 357), but they are complete graphs. The results show that our algorithms are usually better in quality than the naive 
approach. While none of the new algorithms is consistently the best one, the DP algorithm can be called most robust since it is always 
better in quality than the naive approach.
A meta algorithm that runs all single algorithms and picks the best solution would still take only about ten seconds per instance and 
improve the naive approach on average by  $13.5\%$ to $20\%$, depending on the imbalance.
In the whole quantum-chemical workflow the total partitioning time of this meta algorithm is still small.
Parts of this paper have been published in the proceedings of the 15th International Symposium on Experimental Algorithms (SEA 2016).

 \section{Problem Description}
\label{sec:problem}
\renewcommand{\P}{$\mathcal{P}$}
\newcommand{\NP}{$\mathcal{NP}$}
Given an undirected connected graph $G=(V,E)$ with $n$ nodes and $m$ edges, a set of $k$ disjoint non-empty node subsets $V_1, V_2, ... V_k$ is called a $k$-partition of $G$ if the union of the subsets yields $V$ ($V = \bigcup_{1\leq i \leq k} V_i$). We denote partitions with the letter $\Pi$ and call the subsets \emph{fragments} in this paper.

Let $w(u,v)$ be the weight of edge $\{u,v\} \in E$, or 1 in an unweighted graph.
Then, the \emph{cut weight} of a graph partition is the sum of the weights of edges with endpoints in different subsets: 
$\mathrm{cutweight}(\Pi, G) = \sum_{u \in V_i, v \in V_j, i \not= j, V_i, V_j \in \Pi}w(u,v)$.
The largest fragment's size should not exceed $\maxSize := (1+\epsilon) \cdot \lceil n / k \rceil$, where $\epsilon$ is the so-called
\emph{imbalance} parameter. A partition is balanced iff $\epsilon = 0$.

Given a graph $G=(V,E)$ and $k \in \mathbb{N}_{\geq 2}$, \emph{graph partitioning} is often defined as the problem of finding a 
$k$-partition with minimum cut weight while respecting the constraint of maximum imbalance $\epsilon$.
This problem is \NP-hard~\cite{GareyJS74some} for general graphs and values of $\epsilon$.
For the case of $\epsilon = 0$, no polynomial time algorithm can deliver a constant factor approximation guarantee unless \P{} equals \NP{}~\cite{Andreev2006}.

\subsection{Protein Partitioning}
We represent a protein as a weighted undirected graph. Nodes represent amino acids, edges represent bonds or other interactions.
(Note that our graphs are different from protein interaction networks~\cite{Pavlopoulos2011}.)
Edge weights are determined both by the strength of the bond or interaction and the importance of this edge to the protein function.
Such a graph can be constructed from the geometrical structure of the protein using chemical heuristics whose detailed discussion is
beyond our scope.
Partitioning into fragments yields faster running time for DFT since the time required for a fragment is quadratic in its size.
The cut weight of a partition corresponds to the total error caused by dividing this protein into fragments.
A balanced partition is desirable as it maximizes this acceleration effect. However, relaxing the constraint with a small $\epsilon > 0$ makes
sense as this usually helps in obtaining solutions with a lower error.

Note that the positions on the main chain define an ordering of the nodes. From now on we assume the nodes to be 
numbered along the chain.

\paragraph{New Constraints.}
Established graph partitioning tools using the model of the previous section cannot be applied directly to our problem since
protein partitioning introduces additional constraints and an incompatible scenario due to chemical idiosyncrasies:
\begin{itemize}
\item The first constraint is caused by so-called \emph{cap molecules} added for the subsystem calculation. These cap molecules are added at fragment boundaries 
(only in the DFT, not in our graph) to obtain chemically meaningful fragments. This means for the graph that if node $i$ and node $i+2$ belong
to the same fragment, node $i+1$ must also belong to that fragment. Otherwise the introduced cap molecules will overlap spatially and therefore not represent a chemically meaningful structure. We call this the \emph{gap} constraint.
\item More importantly, some graph nodes can have a charge. It is difficult to obtain robust convergence in quantum-mechanical calculations for fragments with more than one charge. Therefore, together with the graph a (possibly empty) list of charged nodes is given and two charged nodes must not be in the same fragment. This is called the \emph{charge} constraint.
\end{itemize}

We consider here \textbf{two problem scenarios} (with different chemical interpretations) in the context of protein partitioning:
\begin{itemize}
\item \textbf{Partitioning along the main chain:} 
The main chain of a protein gives a natural structure to it.
We thus consider a scenario where partition fragments are forced to be continuous on the main chain.
This minimizes the number of cap molecules necessary for the simulation and has the additional advantage of better comparability 
with the naive partition.

Formally, the problem can be stated like this:
Given a graph $G=(V,E)$ with ascending node IDs according to the node's main chain position, an integer $k$ and a maximum imbalance $\epsilon$,
find a $k$-partition with minimum cut weight such that $v_j \in V_i \wedge v_j+l \in V_i \rightarrow v_j+1 \in V_i, 1 \leq j \leq n, l \in \mathbb{N}^+, 1 \leq i \leq k$
and which respects the balance, gap, and charge constraints.

\item \textbf{General protein partitioning:} 
The general problem does not require continuous fragments on the main chain, but also minimizes the cut weight while adhering to the balance, gap, and charge constraints.
\end{itemize}
 
\section{Related Work}
\label{sec:rel-work}
\subsection{General-purpose graph partitioning}
General-purpose graph partitioning tools only require the adjacency information of the graph
and no additional problem-related information. For special inputs (very small $n$ or $k=2$ and small cuts) sophisticated methods from mathematical 
programming~\cite{DBLP:journals/anor/GhaddarAL11} or using branch-and-bound~\cite{DBLP:journals/mp/DellingFGRW15}
are feasible -- and give provably optimal results. To be of general practical use, in particular for larger instances, most widely used tools
employ local heuristics within a multilevel approach, though (see the survey by
Buluc \etal~\cite{DBLP:journals/corr/BulucMSSS13}). 

The multilevel metaheuristic, popularized for graph partitioning in the mid-1990s~\cite{HendricksonLeland95multilevel}, is a powerful technique and consists of three phases:
First, one computes a hierarchy of graphs $G_0, \dots, G_l$ by recursive coarsening in the first phase. 
$G_l$ ought to be small in size, but topologically similar to the input graph $G_0$. A very good initial 
solution for $G_l$ is computed in the second phase. After that, the recursive coarsening is undone and
the solution prolongated to the next-finer level. In this final phase, in successive steps, the respective prolongated solution on each level is improved using 
local search.

A popular local search algorithm for the third phase of the multilevel process is based on the 
method by Fiduccia and Mattheyses (FM)~\cite{FM82} (many others exist, see~\cite{DBLP:journals/corr/BulucMSSS13}). The main idea 
of FM is to exchange nodes between 
blocks in the order of the cost reductions possible, while maintaining a balanced partition. 
After every node has been moved once, 
the solution with the best cost improvement is chosen. Such a phase is repeated several times, each running in time $\bigO(m)$.

\subsection{Methods for subsystem quantum chemistry}
While this work is based on the \emph{molecular fractionation with conjugate cap} (MFCC) scheme~\cite{mfcc-1,he_fragment_2014}, several more sophisticated approaches have been developed which allow to decrease the size of the error in subsystem quantum-mechanical calculations~\cite{fedorov_exploring_2012,fmo-review,jacob_subsystem_2014}. 
The general idea is to reintroduce the interactions missed by the fragmentation of the supermolecule. A prominent example is the \emph{frozen density embedding} (FDE) approach\shorten{where the density of the subsystem and its environment is relaxed in alternating \emph{freeze and thaw} cycles}~\cite{tomasz-cpl-1996,3fde-2008,jacob_subsystem_2014}. 
All these methods strongly depend on the underlying fragmentation of the supermolecule and it is therefore desirable to minimize the error in the form of the cut
weight itself. Thus, the implementation shown in this paper is applicable to all quantum-chemical subsystem methods needing molecule fragments as an input.

 \section{Solving Main Chain Partitioning Optimally}
 \label{sec:algo-main-chain}
 \newcommand{\tableVar}{\ensuremath{\mathrm{partCut}}}
 \newcommand{\cutCost}{\ensuremath{c}}
 \newcommand{\pred}{\ensuremath{\mathrm{pred}}}

As discussed in the introduction, a protein consists of a main chain, which is folded to yield its characteristic spatial structure.
 Aligning a partition along the main chain uses the locality information in the node order and minimizes the number of cap molecules necessary for a given number of fragments. 
 The problem description from Section~\ref{sec:problem} -- finding fragments with continuous node IDs -- is equivalent to finding a set of $k-1$ \emph{delimiter nodes} $v_{d_1}, v_{d_2}, ... v_{d_{k-1}}$ that separate the fragments. Note that this is not a vertex separator, instead the delimiter nodes induce a set of cut edges due to the continuous node IDs. More precisely, delimiter node $v_{d_j}$ belongs to fragment $j$, $1 \leq j \leq k-1$. 

Consider the delimiter nodes in ascending order.
Given the node $v_{d_2}$, the optimal placement of node $v_{d_1}$ only depends on edges among nodes $u < v_{d_2}$, since all edges $\{u,v\}$ from nodes $u < v_{d_2}$ to nodes $v > v_{d_2}$ are cut no matter where $v_{d_1}$ is placed.
Placing node $v_{d_2}$ thus induces an optimal placement for $v_{d_1}$, using only information from edges to nodes $u < v_{d_2}$.
With this dependency of the positions of $v_{d_1}$ and $v_{d_2}$, placing node $v_{d_3}$ similarly induces an optimal choice for $v_{d_2}$ and $v_{d_1}$, using only information from nodes smaller than $v_{d_3}$. The same argument can be continued inductively for nodes $v_{d_4} \dots v_{d_k}$.

Algorithm~\ref{algo:dynamic} is our dynamic-programming-based solution to the main chain partitioning problem. It uses the property stated above to iteratively compute the optimal placement of $v_{d_{j-1}}$ for all possible values of $v_{d_{j}}$. Finding the optimal placements of $v_{d_1}, \dots v_{d_{j-1}}$ given a delimiter $v_{d_{j}}$ at node $i$ is equivalent to the subproblem of partitioning the first $i$ nodes into $j$ fragments, for increasing values of $i$ and $j$. If $n$ nodes and $k$ fragments are reached, the desired global solution is found.
 We allocate (Line~\ref{line:dynamic:allocate-table}) and fill an $n\times k$ table \tableVar\ with the optimal values for the subproblems.
More precisely, the table entry $\tableVar[i][j]$ denotes the minimum cut weight of a $j$-partition of the first $i$ nodes:
\begin{lemma}
 After the execution of Algorithm~\ref{algo:dynamic}, \tableVar$[i][j]$ contains the minimum cut value for a continuous $j$-partition of the first $i$ nodes. If such a partition is impossible, \tableVar$[i][j]$ contains $\infty$.
\label{lemma:dynamic-correctness}
\end{lemma}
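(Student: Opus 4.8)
The plan is to prove the claim by induction on the number of fragments $j$, showing that $\tableVar[i][j]$ equals the minimum cut weight over all \emph{valid} continuous $j$-partitions of the nodes $\{1, \dots, i\}$, where ``valid'' means respecting the balance and charge constraints (the gap constraint is automatically satisfied, since continuous fragments are intervals of node IDs and hence contain no gaps). The backbone of the argument is the cut-cost decomposition already sketched before the lemma: in a continuous partition with fragments ordered by node ID, every cut edge $\{u,v\}$ with $u<v$ is ``charged'' exactly once, namely to the fragment containing its larger endpoint $v$, as an edge leaving that fragment toward strictly smaller-indexed nodes. I would first make this precise as a short structural observation: the total cut weight of a continuous $j$-partition of $\{1,\dots,i\}$ whose last fragment is $\{i'+1, \dots, i\}$ equals the cut weight of the induced $(j-1)$-partition of $\{1, \dots, i'\}$ plus $\cutCost(i',i)$, the combined weight of all edges between $\{i'+1,\dots,i\}$ and $\{1,\dots,i'\}$. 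Crucially, this incremental term depends only on $i'$ and $i$, not on how the prefix is partitioned, which is exactly what enables an additive recurrence.

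With this decomposition, I would argue that the algorithm fills the table according to
\[
\tableVar[i][j] = \min_{i'} \bigl( \tableVar[i'][j-1] + \cutCost(i',i) \bigr),
\]
the minimum ranging over all $i'$ for which (i) the last fragment $\{i'+1, \dots, i\}$ is nonempty and obeys the balance bound, i.e.\ $0 < i - i' \le \maxSize$; (ii) $\{i'+1,\dots,i\}$ contains at most one charged node; and (iii) $i' \ge j-1$, so that a $(j-1)$-partition of the prefix can exist; if no such $i'$ exists, the entry is set to $\infty$. The base case $j=1$ records the single fragment $\{1,\dots,i\}$, giving cut weight $0$ when it holds at most one charge and $\infty$ otherwise. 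The inductive step splits into the two standard directions. For soundness ($\le$), any admissible $i'$ yields, by the induction hypothesis, a valid $(j-1)$-partition of the prefix of cut weight $\tableVar[i'][j-1]$; appending the last fragment produces a valid $j$-partition (balance and charge hold for the new fragment by (i)--(ii) and are inherited for the prefix), whose total cut weight equals the recurrence value by the decomposition. For optimality ($\ge$), I would take any optimal valid $j$-partition of $\{1,\dots,i\}$, let its last fragment be $\{s+1,\dots,i\}$, observe that $s$ satisfies (i)--(iii), note that the induced prefix partition is a valid $(j-1)$-partition of $\{1,\dots,s\}$ and hence has cut weight at least $\tableVar[s][j-1]$ by induction, and conclude via the decomposition that the optimum is at least the recurrence value.

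The main obstacle, and the point requiring the most care, is verifying that the optimization constraints decompose cleanly along the prefix/last-fragment split, so that the ``optimal substructure'' property actually holds. The charge and balance constraints are \emph{local} to individual fragments, so a valid $j$-partition restricts to a valid $(j-1)$-partition on the prefix and, conversely, a valid last fragment may be appended to any valid prefix partition; the gap constraint is vacuous for interval fragments. I would also pin down the $\infty$ bookkeeping explicitly: a minimum over an empty feasible set must evaluate to $\infty$, and infeasibility must propagate correctly, so that if no admissible $i'$ exists or all admissible $i'$ have $\tableVar[i'][j-1]=\infty$, then no valid $j$-partition of $\{1,\dots,i\}$ exists and $\infty$ is indeed the correct entry. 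Once these feasibility decompositions and $\infty$-conventions are settled, the two inequalities combine to give equality for every $(i,j)$, and taking $i=n$, $j=k$ recovers the desired global optimum.
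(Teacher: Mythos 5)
Your proposal is correct and takes essentially the same route as the paper's own proof: induction on the number of fragments $j$, with $\tableVar[i][j]$ characterized via the prefix/last-fragment decomposition and the minimum over admissible predecessor partitions. The paper's version is terser---it gives only the exchange (``cut-and-paste'') argument and leaves the cut-cost decomposition, the locality of the balance/charge constraints, and the $\infty$-bookkeeping implicit---whereas you make these explicit, but the underlying argument is the same.
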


We prove the lemma after describing the algorithm. After the initialization of data structures in Lines~\ref{line:dynamic:allocate-partition} and \ref{line:dynamic:allocate-table},
 the initial values are set in Line~\ref{line:dynamic:initialize-table}: A partition consisting of only one fragment has a cut weight of zero.

 All further partitions are built from a \emph{predecessor partition} and a new fragment. A $j$-partition $\Pi_{i,j}$ of the first $i$ nodes consists of the $j$th fragment and a $(j-1)$-partition with fewer than $i$ nodes.
 A valid predecessor partition of $\Pi_{i,j}$ is a partition $\Pi_{l,j-1}$ of the first $l$ nodes, with $l$ between $i-\maxSize$ and $i-1$.
Node charges have to be taken into account when compiling the set of valid predecessors.
If a backwards search for $\Pi_{i,j}$ from node $i$ encounters two charged nodes $a$ and $b$ with $a<b$, all valid predecessors of $\Pi_{i,j}$ contain at least node $a$ (Line~\ref{line:dynamic:charged-window}).

The additional cut weight induced by adding a fragment containing the nodes $[l+1,i]$ to a predecessor partition $\Pi_{l,j-1}$ is the weight sum of edges connecting nodes in $[1,l]$ to nodes in $[l+1,i]$: $\cutCost{}[l][i] = \sum_{\{u,v\} \in E, u \in [1,l], v \in [l+1,i]} w(u,v)$.
Line~\ref{line:dynamic:cut-cost-table} computes this weight difference for the current node $i$ and all valid predecessors $l$.

For each $i$ and $j$, the partition $\Pi_{i,j}$ with the minimum cut weight is then found in Line~\ref{line:dynamic:cut} by iterating backwards over all valid predecessor partitions and selecting the one leading to the minimum cut. To reconstruct the partition, we store the predecessor in each step (Line~\ref{line:dynamic:pred}). If no partition with the given values is possible, the corresponding entry in \tableVar\ remains at $\infty$.

After the table is filled, the resulting minimum cut weight is at $\tableVar[n][k]$, the corresponding partition is found by following the predecessors (Line~\ref{line:dynamic:follow-predecessors}).

 \begin{algorithm}[tb]
 \KwIn{Graph $G=(V,E)$, fragment count $k$, bool list \emph{isCharged}, imbalance $\epsilon$}
 \KwOut{partition $\Pi$}
 \maxSize = $\lceil |V|/k \rceil \cdot (1+\epsilon)$\;
 allocate empty partition $\Pi$\;\label{line:dynamic:allocate-partition}

 \tableVar[i][j] = $\infty, \forall i \in [1,n], \forall j \in [1,k]$\;\label{line:dynamic:allocate-table}
 \tcc{initialize empty table \tableVar{} with $n$ rows and $k$ columns}
 \tableVar[i][1] = 0, $\forall i \in [1, \maxSize]$\;\label{line:dynamic:initialize-table}

\For{$1 \leq i \leq n$}{\label{line:dynamic:outer-loop}
  windowStart = $\max(i-\maxSize, 1)$\;\label{line:dynamic:windowStart}
  if necessary, increase windowStart so that [windowStart, i] contains at most one charged node\;\label{line:dynamic:charged-window}
  compute column $i$ of cut cost table $\cutCost{}$\;\label{line:dynamic:cut-cost-table}
  \For{$2 \leq j \leq k $}{\label{line:dynamic:inner-loop}
    \tableVar[i][j] = $\min_{l \in [windowStart, i]} \tableVar[l][j-1] + \cutCost{}[l][i]$\;\label{line:dynamic:cut}
    \pred[i][j] = $\argmin_{l \in [windowStart, i]} \tableVar[l][j-1] + \cutCost{}[l][i]$\;\label{line:dynamic:pred}
  }
}

$i = n$\;
\For{$j = k$; $j \geq 2$; $j-=1$}{
  $\mathit{nextI} = \pred[i][j]$\;\label{line:dynamic:follow-predecessors}
  assign nodes between \emph{nextI} and $i$ to fragment $\Pi_j$\;\label{line:dynamic:assign-partition}
  i = \emph{nextI}\;
}
 \Return{$\Pi$}
\caption{Main Chain Partitioning with Dynamic Programming}
\label{algo:dynamic}
 \end{algorithm}

We are now ready to prove Lemma~\ref{lemma:dynamic-correctness} and the algorithm's correctness and time complexity.
\begin{proof}[of Lemma~\ref{lemma:dynamic-correctness}]
By induction over the number of partitions $j$. \\[0.75ex]
\emph{Base Case: $j = 1, \forall i$.}
A 1-partition is a continuous block of nodes. The cut value is zero exactly if the first $i$ nodes contain at most one charge and $i$ is not larger than $\maxSize$. This cut value is written into $\tableVar$\ in Lines~\ref{line:dynamic:allocate-table} and \ref{line:dynamic:initialize-table} and not changed afterwards. \\[0.75ex]
\emph{Inductive Step: $j-1 \rightarrow j$.}
Let $i$ be the current node: A cut-minimal $j$-partition $\Pi_{i,j}$ for the first $i$ nodes contains a cut-minimal $(j-1)$-partition $\Pi_{i',j-1}$ with continuous node blocks.
If $\Pi_{i',j-1}$ were not minimum, we could find a better partition $\Pi_{i',j-1}'$ and use it to improve $\Pi_{i,j}$, a contradiction to $\Pi_{i,j}$ being cut-minimal.
Due to the induction hypothesis, $\tableVar[l][j-1]$ contains the minimum cut value for all node indices $l$, which includes $i'$.
The loop in Line~\ref{line:dynamic:cut} iterates over possible predecessor partitions $\Pi_{l,j-1}$ and selects the one leading to the minimum cut after node $i$.
Given that partitions for $j-1$ are cut-minimal, the partition whose weight is stored in $\tableVar[i][j]$ is cut-minimal as well.

If no allowed predecessor partition with a finite weight exists, $\tableVar[i][j]$ remains at infinity.\hfill\qed 
\end{proof}

\label{subsec:algo-main-chain-complexity}
\begin{theorem}
Algorithm~\ref{algo:dynamic} computes the optimal main chain partition in time $\bigO(n^2 \cdot \maxSize)$.
\label{thm:dynamic-optimality}
\end{theorem}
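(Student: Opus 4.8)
The optimality claim is already in hand: Lemma~\ref{lemma:dynamic-correctness} guarantees that $\tableVar[n][k]$ holds the minimum cut weight of a continuous $k$-partition, and the backtracking over $\pred$ in Lines~\ref{line:dynamic:follow-predecessors}--\ref{line:dynamic:assign-partition} reconstructs a partition attaining it. So the plan is to spend the proof on the running time and show it is $\bigO(n^2 \cdot \maxSize)$.

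First I would bound the size of the predecessor window. Since windowStart $= \max(i-\maxSize,1)$ and the charge adjustment in Line~\ref{line:dynamic:charged-window} can only raise it, the window $[\text{windowStart},i]$ never contains more than $\maxSize+1$ nodes. Every loop and sum indexed by $l$ therefore ranges over $\bigO(\maxSize)$ values, a fact I would use repeatedly. Next I would charge the inner loop over $j$ (Lines~\ref{line:dynamic:cut}--\ref{line:dynamic:pred}): it runs $k-1$ times per outer iteration, and each pass computes a minimum of $\tableVar[l][j-1]+\cutCost[l][i]$ over the window, i.e.\ $\bigO(\maxSize)$ work. This yields $\bigO(k\cdot\maxSize)$ per outer iteration and $\bigO(n\cdot k\cdot\maxSize)$ in total; because $k\le n$, this already fits inside $\bigO(n^2\cdot\maxSize)$.

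The main obstacle is the cost of assembling column $i$ of the cut-cost table in Line~\ref{line:dynamic:cut-cost-table}, since a naive evaluation of each $\cutCost[l][i]=\sum_{u\in[1,l],\,v\in[l+1,i]}w(u,v)$ would be far too expensive. I would compute the column incrementally in $l$, starting from $\cutCost[i][i]=0$ and sliding the boundary one node at a time. Moving node $l+1$ from the right block into the left block turns its edges into $[1,l]$ into cut edges and removes its edges into $[l+2,i]$ from the cut, giving $\cutCost[l][i]=\cutCost[l+1][i]+w(l+1,[1,l])-w(l+1,[l+2,i])$. Each step inspects the $\bigO(n)$ edges incident to one node (the graphs here are complete), so a full column costs $\bigO(n\cdot\maxSize)$ and all columns together cost $\bigO(n^2\cdot\maxSize)$; alternatively, precomputing row-wise prefix sums of edge weights in $\bigO(n^2)$ time makes each entry $\bigO(1)$ and renders this step subdominant.

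Finally I would add up the three contributions---allocation $\bigO(nk)$, the cut-cost tables $\bigO(n^2\cdot\maxSize)$, and the min-loop $\bigO(nk\cdot\maxSize)$---and invoke $k\le n$ once more to conclude the overall bound $\bigO(n^2\cdot\maxSize)$. The only genuinely delicate point is the cut-cost recurrence, both in verifying that the add/subtract bookkeeping of the moved node's incident edges is correct and in confirming that its per-step cost does not exceed $\bigO(n)$; everything else is a routine accounting of nested loops against a window of width $\bigO(\maxSize)$.
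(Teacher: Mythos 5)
Your proof is correct and follows essentially the same route as the paper's: optimality is delegated entirely to Lemma~\ref{lemma:dynamic-correctness}, and the running time is shown to be dominated by filling each cut-cost column in $\bigO(n \cdot \maxSize)$ time by iterating backwards over the window, giving $\bigO(n^2 \cdot \maxSize)$ overall with the $\bigO(n\cdot k\cdot\maxSize)$ minimization loop subsumed via $k \le n$. Your explicit sliding-boundary recurrence simply spells out what the paper compresses into the phrase ``the cut weights constitute a reverse prefix sum'' (note only the harmless wording slip about which block node $l+1$ moves into---the formula itself is right).
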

\begin{proof}
The correctness in terms of optimality follows directly from Lemma~\ref{lemma:dynamic-correctness}. We thus continue with establishing the time complexity.
The nested loops in Lines~\ref{line:dynamic:outer-loop} and~\ref{line:dynamic:inner-loop} require $\bigO(n\cdot k)$ iterations in total.
Line~\ref{line:dynamic:charged-window} is executed $n$ times and has a complexity of $\maxSize$.
At Line~\ref{line:dynamic:cut} in the inner loop, up to \maxSize{} predecessor partitions need to be evaluated, each with two constant time table accesses.
Computing the cut weight column $\cutCost{}[\cdot][i]$ for fragments ending at node $i$ (Line~\ref{line:dynamic:cut-cost-table}) involves summing over the edges of $\bigO(\maxSize)$ predecessors, each having at most $\bigO(n)$ neighbors. Since the cut weights constitute a reverse prefix sum, the column $\cutCost{}[\cdot][i]$ can be computed in $\bigO(n\cdot\maxSize)$ time by iterating backwards.
Line~\ref{line:dynamic:cut-cost-table} is executed $n$ times, leading to a total complexity of $\bigO(n^2\cdot\maxSize)$.
Following the predecessors and assigning nodes to fragments is possible in linear time, thus the $\bigO(n^2\cdot\maxSize)$ to compile the cut cost table dominates the running time.\hfill\qed
\end{proof}

 \section{Algorithms for General Protein Partitioning}
 \label{sec:algo-general}
 As discussed in Section~\ref{sec:problem}, one cannot use general-purpose graph partitioning programs due to the new constraints required by the DFT calculations. Moreover, if the constraint of the previous section is dropped, the DP-based algorithm is not optimal in general any more.
Thus, we propose three algorithms for the general problem in this section: The first two, a greedy agglomerative method and Multilevel-FM, build on existing graph
partitioning knowledge but incorporate the new constraints directly into the optimization process. 
The third one is a simple postprocessing repair procedure that works in many cases. It takes the output of a traditional graph partitioner and fixes it
so as to fulfill the constraints.

 \subsection{Greedy Agglomerative Algorithm}
 \label{subsec:greedy}
 The greedy agglomerative approach, shown in Algorithm~\ref{algo:greedy} in Appendix~\ref{sec:add-pseudo}, is similar in spirit to Kruskal's MST algorithm and 
to approaches proposed for clustering graphs with respect to the objective function modularity~\cite{clauset2004finding}. 
It initially sorts edges by weight and puts each node into a singleton fragment.
 Edges are then considered iteratively with the heaviest first; the fragments belonging to the incident nodes are merged if no constraints are violated. This is repeated until no edges are left or the desired fragment count is achieved.

The initial edge sorting takes $\bigO(m\log m)$ time. Initializing the data structures is possible in linear time. The main loop (Line~\ref{line:greedy-main-loop}) has at most $m$ iterations. Checking the size and charge constraints is possible in constant time by keeping arrays of fragment sizes and charge states. The time needed for checking the gaps and merging is linear in the fragment size and thus at most $\bigO(\maxSize)$.

 The total time complexity of the greedy algorithm is thus:
  \[
 T(\mathrm{Greedy}) \in \bigO(m \cdot \max{\{\mathrm{\maxSize}, \log m\}}).
 \]

\subsection{Multilevel Algorithm with Fiduccia-Mattheyses Local Search}
\label{subsec:multilevel}

Algorithm~\ref{algo:multilevel} (Appendix \ref{sec:add-pseudo}) is similar to existing multilevel partitioners using non-binary (\ie $k>2$) Fiduccia-Mattheyses
(FM) local search. Our adaptation incorporates the constraints throughout the whole partitioning process, though. 
First a hierarchy of graphs $G_0, G_1, \dots G_l$ is created by recursive coarsening (Line~\ref{line:multilevel:coarsened-hierarchy}).
The edges contracted during coarsening are chosen with a local matching strategy.
An edge connecting two charged nodes stays uncontracted, thus ensuring that a fragment contains at most one charged node even in the coarsest partitioning phase. 
The coarsest graph is then partitioned into $\Pi_l$ using region growing or recursive bisection.
If an optional input partition $\Pi '$ is given, it is used as a guideline during coarsening and replaces $\Pi_l$ if it yields a better cut.
We execute both our greedy and DP algorithm and use the partition with the better cut as input partition $\Pi '$ for the multilevel algorithm.

After obtaining a partition for the coarsest graph, the graph is iteratively uncoarsened and the partition projected to the next finer level.
We add a rebalancing step at each level (Line~\ref{line:multilevel:rebalance}), since a non-binary FM step does not guarantee balanced partitions if the input is imbalanced.
A Fiduccia-Mattheyses step is then performed to yield local improvements (Line~\ref{line:multilevel:fm}):
For a partition with $k$ fragments, this non-binary FM step consists of one priority queue for each fragment.
Each node $v$ is inserted into the priority queue of its current fragment, the maximum gain (\ie reduction in cut weight when $v$ is moved to another fragment) is used as key.
While at least one queue is non-empty, the highest vertex of the largest queue is moved if the constraints are still fulfilled, and the movement recorded.
After all nodes have been moved, the partition yielding the minimum cut is taken.
In our variant, nodes are only moved if the charge constraint stays fulfilled.
 
 \subsection{Repair Procedure}
 \label{subsec:repair}
As already mentioned, traditional graph partitioners produce in general solutions that do not adhere to the constraints for protein partitioning.
To be able to use existing tools, however, we propose a simple repair procedure for an existing partition which possibly does not fulfill the 
charge, gap, or balance constraints. To this end, Algorithm~\ref{algo:repair} in Appendix~\ref{sec:add-pseudo} performs one sweep over all nodes (Line~\ref{line:repair-outer-loop}) and
checks for every node $v$ whether the constraints are violated at this point.
If they are and $v$ has to be moved, an FM step is performed: Among all fragments that could possibly receive $v$, the one minimizing the cut weight is selected. If no suitable target fragment exists, a new singleton fragment is created. Note that due to the local search, this step can lead to more than $k$ fragments, even if a partition with $k$ fragments is possible.

The cut weight table allocated in Line~\ref{line:repair:cutWeight-definition} takes $\bigO(n\cdot k + m)$ time to create.
Whether a constraint is violated can be checked in constant time per node by counting the number of nodes and charges observed for each fragment.
A node needs to be moved when at least one charge or at least $\maxSize$ nodes have already been encountered in the same fragment.
Finding the best target partition (Line~\ref{line:repair:find-target}) takes $\bigO(k)$ iterations, updating the cut weight table after moving a node $v$ is linear in the degree $\deg(v)$ of $v$.
The total time complexity of a repair step is thus: $\bigO(n\cdot k + m + n \cdot k + \sum_{v} \deg(v))$ = $\bigO(n\cdot k + m)$.

 \section{Experiments}
 \label{sec:experiments}
 \subsection{Settings}
 \label{subsec:settings}
 We evaluate our algorithms on graphs derived from several proteins and compare the resulting cut weight.
 As main chain partitioning is a special case of general protein partitioning, the solutions generated by our dynamic programming algorithm are valid solutions of the general problem, though perhaps not optimal. Other algorithms evaluated are Algorithm~\ref{algo:greedy} (Greedy), \ref{algo:multilevel} (Multilevel), and the external partitioner KaHiP~\cite{sandersschulz2013}, used with the repair step discussed in Section~\ref{subsec:repair}. 
 The algorithms are implemented in C++ and Python using the NetworKit tool suite~\cite{DBLP:journals/corr/StaudtSM14}, the source code is available from a hg repository\footnote{\url{https://algohub.iti.kit.edu/parco/NetworKit/NetworKit-chemfork/}}.

 We use graphs derived from five common proteins, covering the most frequent structural properties. Ubiquitin~\cite{ubi_pdb} (also see Figure~\ref{fig:3d_ubiquitin} in Appendix~\ref{sec:illustrate}) and the Bubble Protein~\cite{bubble_pdb} are rather small proteins with 76 and 64 amino acids, respectively. Due to their biological functions, their overall size and their diversity in the contained structural features, they are commonly used as test cases for quantum-chemical subsystem methods~\cite{karin-3fde}.
The Green Fluorescent Protein (GFP)~\cite{gfp_pdb} plays a crucial role in the bioluminescence of marine organisms and is widely expressed in other organisms as a fluorescent label for microscopic techniques.
Like the latter one, Bacteriorhodopsin (bR)~\cite{br_pdb} and the Fenna-Matthews-Olson protein (FMO)~\cite{fmo_pdb} are large enough to render quantum-chemical calculations on the whole proteins practically infeasible. Yet, investigating them with quantum-chemical methods is key to understanding the photochemical processes they are involved in.
The graphs derived from the latter three proteins have 225, 226 and 357 nodes, respectively. They are complete graphs with weighted $n(n-1)/2$ edges.
All instances can be found in the mentioned hg repository in folder \texttt{input/}.

In our experiments we partition the graphs into fragments of different sizes (\ie we vary the fragment number $k$). The small proteins ubiquitin and bubble are partitioned into 2, 4, 6 and 8 fragments, leading to fragments of average size 8-38. The other proteins are partitioned into 8, 12, 16, 20 and 24 fragments, yielding average sizes between 10 and 45.
As maximum imbalance, we use values for $\epsilon$ of 0.1 and 0.2. While this may be larger than usual values of $\epsilon$ in graph partitioning, fragment sizes in our case are comparably small and an imbalance of 0.1 is possibly reached with the movement of a single node.

On these proteins, the running time of all partitioning implementations is on the order of a few seconds on a commodity laptop, we therefore omit detailed time measurements. 

 \paragraph{Charged Nodes.}
 Depending on the environment, some of the amino acids are charged. As discussed in Section~\ref{sec:problem}, at most one charge is allowed per fragment.
 We repeatedly sample $\lfloor 0.8\cdot k \rfloor$ random charged nodes among the potentially charged, under the constraint that a valid main chain partition is still possible.
 To smooth out random effects, we perform 20 runs with different random nodes charged. Introducing charged nodes may cause the naive partition to become invalid. In these cases, we use the repair procedure on the invalid naive partition and compare the cut weights of other algorithms with the cut weight of the repaired naive partition.
 \subsection{Results}
\begin{figure}
\includegraphics{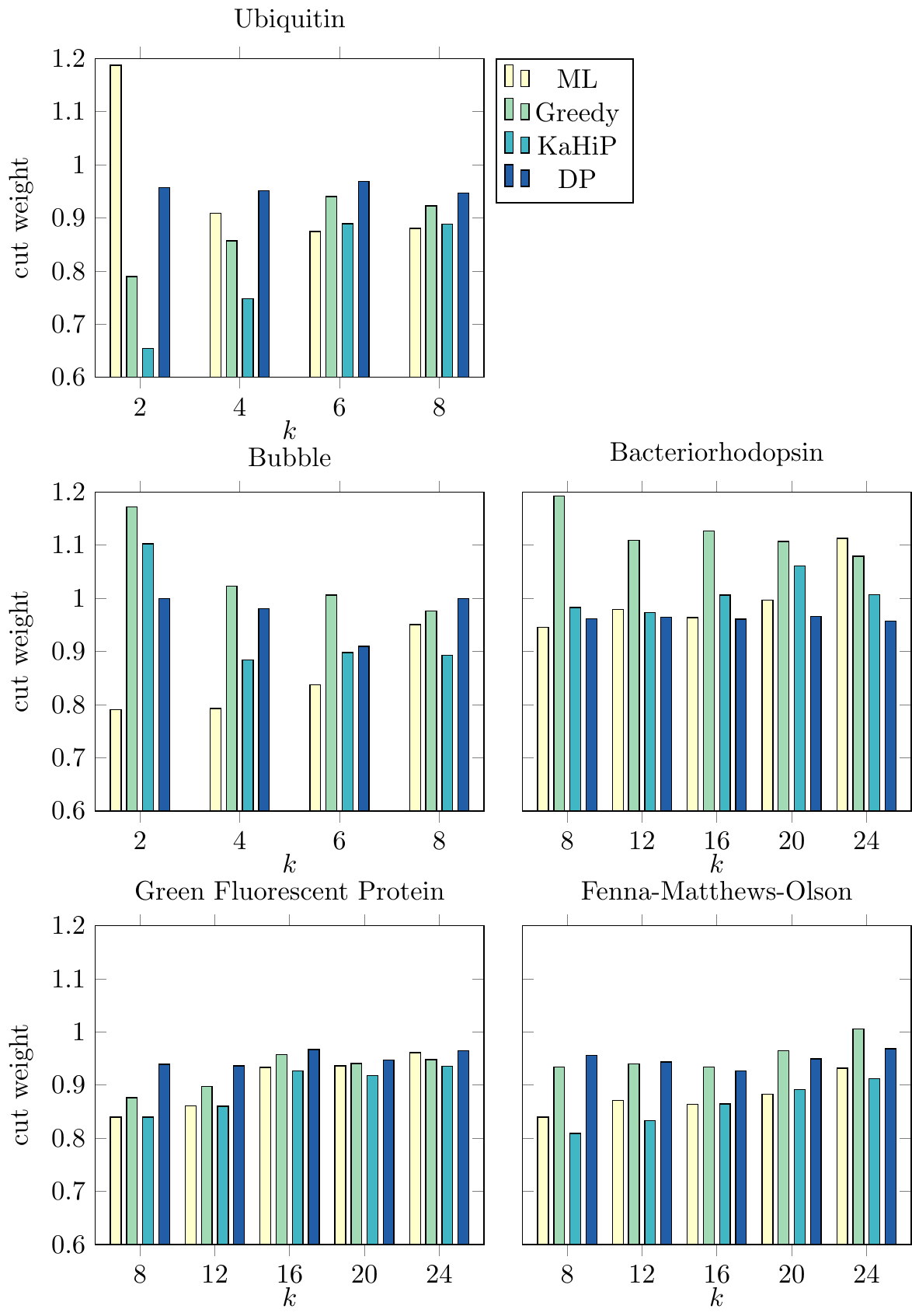}
\caption{Comparison of partitions given by several algorithms and proteins, for $\epsilon=0.1$. The partition quality is measured by the cut weight in comparison to the naive solution.}
\label{plot:uncharged-comparison}
\end{figure}

For the uncharged scenario, Figure~\ref{plot:uncharged-comparison} shows a comparison of cut weights for different numbers of fragments and a maximum imbalance of 0.1. The cut weight is up to 34.5\% smaller than with the naive approach (or 42.8\% with $\epsilon = 0.2$, see Figure~\ref{plot:uncharged-comparison-epsilon-0.2}). The best algorithm choice depends on the protein: For ubiquitin, green fluorescent protein, and Fenna-Matthew-Olson protein, the external partitioner KaHiP in combination with the repair step described in Section~\ref{subsec:repair} gives the lowest cut weight when averaged over different fragment sizes. For the bubble protein, the multilevel algorithm from Section~\ref{subsec:multilevel} gives on average the best result, while for bacteriorhodopsin, the best cut weight is achieved by the dynamic programming (DP) algorithm.
The DP algorithm is always as least as good as the naive approach.
This already follows from Theorem~\ref{thm:dynamic-optimality}, as the naive partition is aligned along the main chain and thus found by DP in case it is optimal.
DP is the only algorithm with this property, all others perform worse than the naive approach for at least one combination of parameters.

The general intuition that smaller fragment sizes leave less room for improvements compared to the naive solution is confirmed by our experimental results.
Figure~\ref{plot:uncharged-comparison-epsilon-0.2} (Appendix ~\ref{sec:illustrate}) shows the comparison with imbalance $\epsilon = 0.2$.
While the general trend is similar and the best choice of algorithm depends on the protein, the cut weight is usually more clearly improved.
Moreover, a meta algorithm that executes all single algorithms and picks their best solution yields average improvements (geometric mean)
of $13.5\%, 16\%,$ and $20\%$ for $\epsilon = 0.1, 0.2$, and $0.3$, respectively, compared to the naive reference.
Such a meta algorithm requires only about ten seconds per instance, negligible in the whole DFT workflow.

Randomly charging nodes changes the results only insignificantly, as seen in Figure~\ref{plot:charged-epsilon-0.1}.
The necessary increase in cut weight for the algorithm's solutions is likely compensated by a similar increase in the naive partition due to the necessary repairs.
 
 \section{Conclusions}
 \label{sec:conclusions}
 Partitioning protein graphs for subsystem quantum-chemistry is a new problem with unique constraints which general-purpose graph partitioning algorithms were unable to handle. We have provided several algorithms for this problem and proved the optimality of one in the special case of partitioning along the main chain. 
With our algorithms chemists are now able to address larger problems in an automated manner with smaller error. Larger proteins, in turn, in connection with a reasonable imbalance, may provide more opportunities for improving the quality of the naive solution further.

 \clearpage
\bibliographystyle{plain}
\bibliography{paper,Bibliography,zotero}
\clearpage
\appendix

\section*{Appendix}

\section{Illustrations and Additional Experimental Results}
\label{sec:illustrate}
 \begin{figure}[h]
  \begin{tikzpicture}
  \begin{axis}[xlabel=number of amino acids in fragment, ylabel={error [arb. u.]}]
   \plot+[blue] table {plots/frag_err_naive.dat};
   \end{axis}
 \end{tikzpicture}
 \caption{Predicted error for interaction energies with naive fragmentation every $X$ amino acids for the small protein ubiquitin. Unpredictable minima and maxima depending on the location of the uniformly distributed cuts occur along the main chain.}
 \label{plot:frag_err_naive}
 \end{figure}
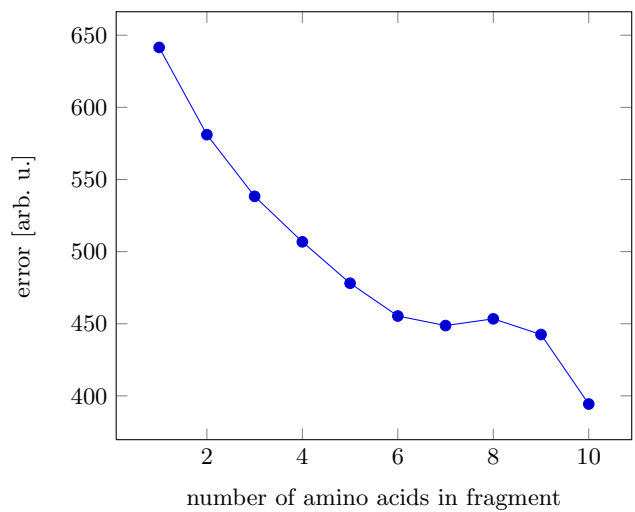

 \begin{figure}[h]
 \includegraphics[width=0.5\textwidth]{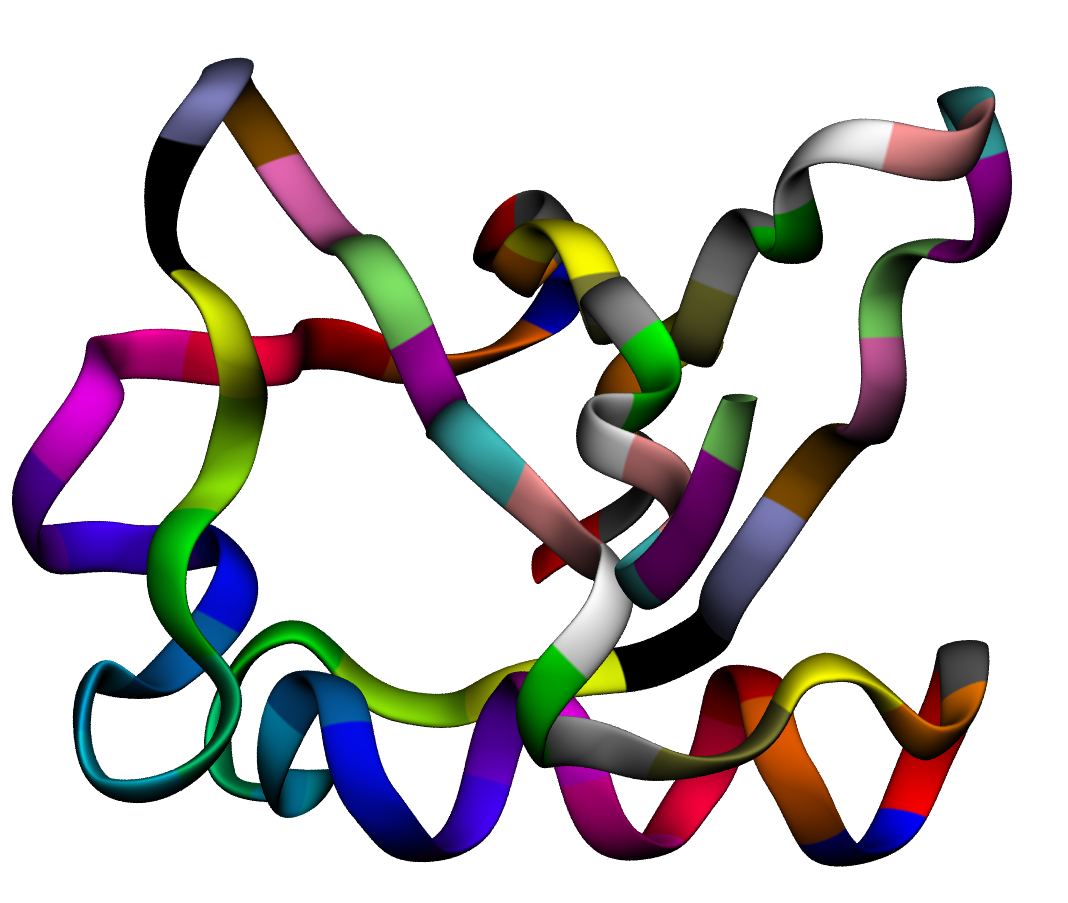}
  \caption{3D-Visualization of Ubiquitin. Single amino acids in different colors. Helical secondary structure at the bottom, beta-sheet like secondary structures in the upper left and right.}
 \label{fig:3d_ubiquitin}
 \end{figure}

\label{appx:run-time-chem}
 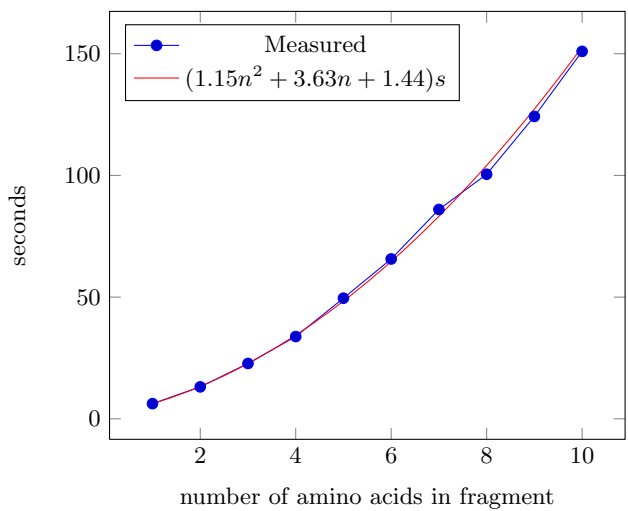
\begin{figure}
  \begin{tikzpicture}
  \begin{axis}[xlabel=number of amino acids in fragment, ylabel=seconds, legend pos=north west]
   \plot+[blue] table {plots/time_frag_mean.dat};\addlegendentry{Measured};
   \plot[red] expression[domain=1:10] {1.15*x^2 + 3.63*x + 1.44};\addlegendentry{$(1.15n^2 + 3.63n + 1.44)s$};
  \end{axis}
 \end{tikzpicture}
 \caption{Time in seconds required for quantum chemical density functional (DFT, BP86, DZP) calculations of protein fragments on 16 Intel Xeon cores (2x Haswell-EP/2640v3/2.6 GHz) executed with pyADF~\cite{pyadf-2011} and ADF program package~\cite{adf}. As seen by the close match of the red fit line, time grows quadratically.}
 \label{plot:time-simulation}
 \end{figure}

\begin{figure}
\includegraphics{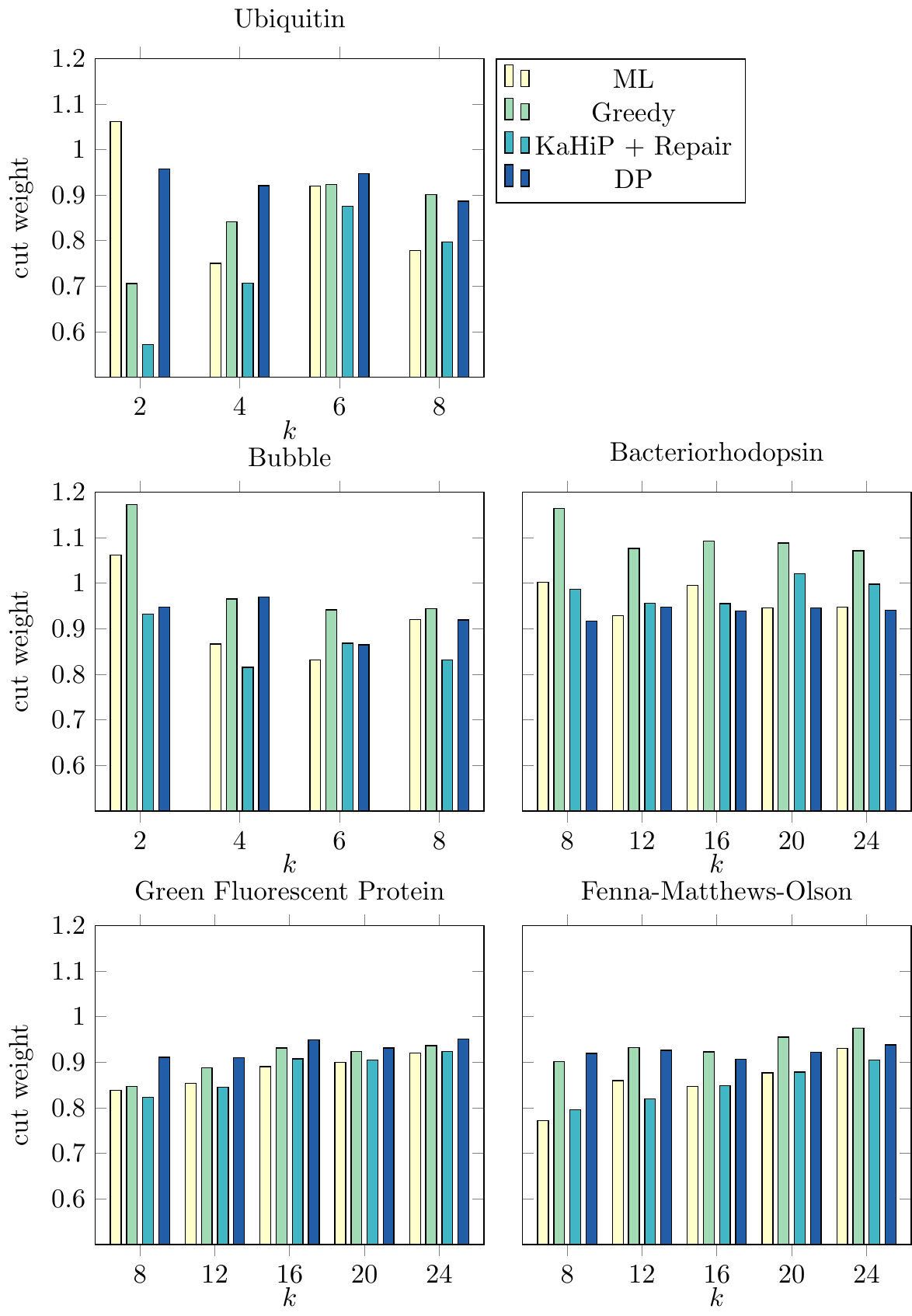}
\caption{Comparison of cut weights for $\epsilon=0.2$.}
\label{plot:uncharged-comparison-epsilon-0.2}
\end{figure}

\begin{figure}
\includegraphics{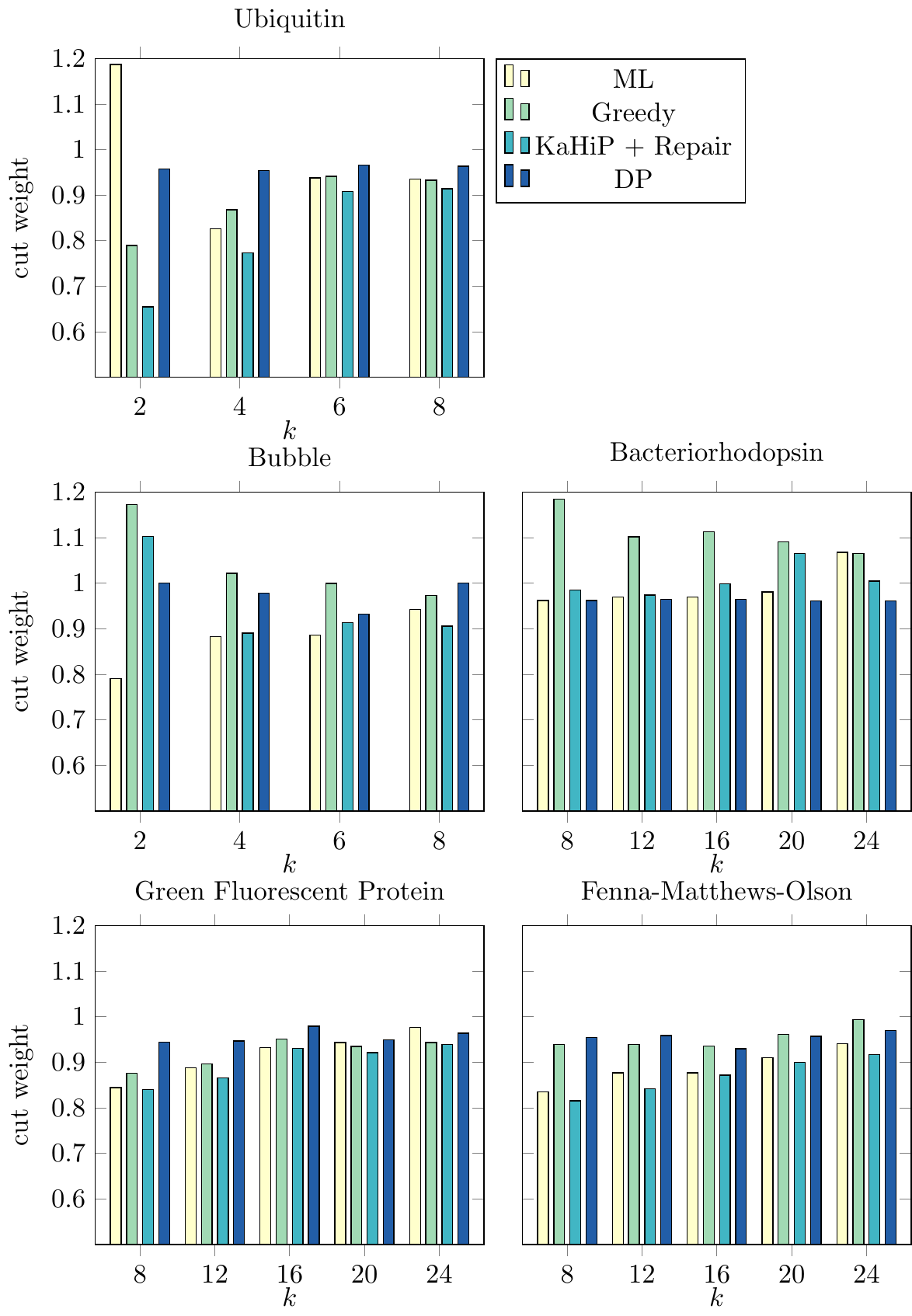}
\caption{Comparison of cut weights for $\epsilon=0.1$ and node charges.}
\label{plot:charged-epsilon-0.1}
\end{figure}

\clearpage
\section{Additional Pseudocodes}
\label{sec:add-pseudo}
 \begin{algorithm}
 \KwIn{Graph $G=(V,E)$, fragment count $k$, list \emph{charged}, imbalance $\epsilon$}
 \KwOut{partition $\Pi$}
  sort edges by weight, descending\;
  $\Pi$ = create one singleton partition for each node\; 
  chargedPartitions = partitions containing a charged node\;
  \maxSize = $\lceil |V|/k \rceil \cdot (1+\epsilon)$\;

  \For{edge $\{u,v\}$}{\label{line:greedy-main-loop}
    allowed = True\;
    \If{$\Pi[u] \in \mathrm{chargedPartitions}$ and $\Pi[v] \in \mathrm{chargedPartitions}$}{
      allowed = False\;
    }
    \If{$|\Pi[u]| + |\Pi[v]| >$ \maxSize}{
      allowed = False\;
    }
    \For{node $x \in \Pi[u] \cup \Pi[v]$}{
      \If{$x+2 \in \Pi[u] \cup \Pi[v]$ and $x+1 \not\in \Pi[u] \cup \Pi[v]$}{
      allowed = False\;
      }
    }
    \If{allowed}{
      merge $\Pi[u]$ and $\Pi[v]$\;
      update chargedPartitions\;
    }
    \If{number of fragments in $\Pi$ equals $k$}{break\;}
  }
  \Return{$\Pi$}
\caption{Greedy Agglomerative Algorithm}
\label{algo:greedy}
 \end{algorithm}

  \begin{algorithm}[tb]
 \KwIn{Graph $G=(V,E)$, fragment count $k$, list \emph{charged}, imbalance $\epsilon$, [$\Pi '$]}
\KwOut{partition $\Pi$}
$G_0, \dots, G_l$ = hierarchy of coarsened Graphs, $G_0 = G$\;\label{line:multilevel:coarsened-hierarchy}
$\Pi_l$ = partition $G_l$ with region growing or recursive bisection\;
\For{$0 \leq i < l$}{
  uncoarsen $G_i$ from $G_{i+1}$\;\label{line:multilevel:uncoarsen}
  $\Pi_i$ = projected partition from $\Pi_{i+1}$\;\label{line:multilevel:project}
  rebalance $\Pi_i$, possibly worsen cut weight\;\label{line:multilevel:rebalance}
  \tcc{Local improvements}
  gain = NaN\;
  \Repeat{gain == 0}{
    oldcut = $\mathrm{cut}(\Pi_i', G)$\;
    $\Pi_i'$ = Fiduccia-Mattheyses-Step of $\Pi_i$ with constraints\;\label{line:multilevel:fm}
    gain = $\mathrm{cut}(\Pi_i', G)$ - oldcut\;
  }
}
\caption{Multilevel-FM}
\label{algo:multilevel}
 \end{algorithm}

  \begin{algorithm}
\KwIn{Graph $G=(V,E)$, $k$-partition $\Pi$, list \emph{charged}, imbalance $\epsilon$}
\KwOut{partition $\Pi'$}

cutWeight[$i][j] = 0, 1 \leq i \leq n, 1 \leq j \leq k$\;\label{line:repair:cutWeight-definition}
\For{edge $\{u,v\}$ in $E$}{
  cutWeight$[u][\Pi(u)] += w(u,v)$\;
  cutWeight$[v][\Pi(v)] += w(u,v)$\;
}

\For{node $v$ in $V$}{\label{line:repair-outer-loop}
  \tcc{Check whether node can stay}
   \If{charge violated \emph{or} size violated \emph{or} gap of size 1}{
     $\Psi$ = set of allowed target fragments\;
     \If{$\Psi$ is empty}{
       create new fragment for $v$\;
     }
     \Else{
     \tcc{Fiduccia-Mattheyses-step: To minimize the cut weight, move the node to the fragment to which it has the strongest connection}
       target = $\argmax_{i\in\Psi} \{\mathrm{cutWeight}[v][i]\}$\;\label{line:repair:find-target}
       move $v$ to target\;
     }
     update charge counter, size counter and cutWeight\;\label{line:repair:update}
   }
}
\caption{Repairing a partition}
\label{algo:repair}
\end{algorithm}
 \end{document}